\newtheorem{theorem}{Theorem}
\definecolor{dblackcolor}{rgb}{0.0,0.0,0.0}
\definecolor{dbluecolor}{rgb}{0.01,0.02,0.7}
\definecolor{dgreencolor}{rgb}{0.2,0.4,0.0}
\definecolor{dgraycolor}{rgb}{0.30,0.3,0.30}
\algrenewcommand\algorithmicrequire{\textbf{Precondition:}}
\algrenewcommand\algorithmicensure{\textbf{Postcondition:}}
\lstdefinelanguage{Sage}[]{Python}
{morekeywords={False,sage,True},sensitive=true}
\renewcommand{\algorithmicrequire}{\textbf{Input:}}
\renewcommand{\algorithmicensure}{\textbf{Output:}}
\begin{document}

\title{Rigorous high-precision computation of the Hurwitz zeta function and its derivatives}

\author{%
  Fredrik Johansson
  \footnote{Supported by the Austrian Science Fund (FWF) grant Y464-N18.}\\[\medskipamount]
  \small{\strut RISC} \\
  \small{\strut Johannes Kepler University} \\
  \small{\strut 4040 Linz, Austria} \\
  \small{\strut fredrik.johansson@risc.jku.at}
}

\date{}

\maketitle

\begin{abstract}
We study the use of the Euler-Maclaurin formula to numerically evaluate the
Hurwitz zeta function $\zeta(s,a)$ for $s, a \in \mathbb{C}$, along with
an arbitrary number of derivatives with respect to $s$, to arbitrary
precision with rigorous error bounds. Techniques that lead to a fast
implementation are discussed. We present new record computations of
Stieltjes constants, Keiper-Li coefficients and the first nontrivial
zero of the Riemann zeta function, obtained using an open source
implementation of the algorithms described in this paper.
\end{abstract}

\section{Introduction}

The Hurwitz zeta function $\zeta(s,a)$ is defined for complex
numbers $s$ and $a$ by analytic continuation of the sum
\begin{equation*}
\zeta(s,a) = \sum_{k=0}^{\infty} \frac{1}{(k+a)^s}.
\end{equation*}
The usual Riemann zeta function is given by $\zeta(s) = \zeta(s,1)$.

In this work, we consider numerical computation of $\zeta(s,a)$ by the
Euler-Maclaurin formula with rigorous error control. Error bounds for
$\zeta(s)$ are classical (see for example \cite{Edwards1974},
\cite{BorweinBradleyCrandall2000} and numerous references therein),
but previous works have restricted to the case $a = 1$ or have not
considered derivatives. Our main contribution is to give an efficiently
computable error bound for $\zeta(s,a)$ valid for any complex $s$ and $a$
and for an arbitrary number of derivatives with respect to $s$
(equivalently, we allow $s$ to be a formal power series).

We also discuss implementation aspects, such as parallelization and
use of fast polynomial arithmetic. An open source implementation of
$\zeta(s,a)$ based on the algorithms described in this paper is
available. In the last part of the paper, we present results from
some new record computations done with this implementation.

Our interest is in evaluating $\zeta(s,a)$ to high precision
(hundreds or thousands of digits) for a single~$s$ of moderate height,
say with imaginary part less than $10^6$. Investigations of zeros of
large height typically use methods based on the Riemann-Siegel formula
and fast multi-evaluation techniques such as the
Odlyzko-Sch\"{o}nhage algorithm \cite{Odlyzko1988fast} or
the recent algorithm of Hiary \cite{Hiary2011fast}.

This work is motivated by several applications. For example, recent work
of Matiyasevich and Beliakov required values of thousands of nontrivial
zeros $\rho_n$ of $\zeta(s)$ to a precision of several thousand digits
\cite{Matiyasevich2012, MatiyasevichBeliakov2011}. Investigations of
quantities such as the Stieltjes constants $\gamma_n(a)$ and the Keiper-Li
coefficients $\lambda_n$ also call for high-precision
values \cite{Keiper1992power, Kreminski2003}. The difficulty is not
necessarily that the final result needs to be known to very high
accuracy, but that intermediate calculations may involve catastrophic
cancellation.

More broadly, the Riemann and Hurwitz zeta functions are useful for
numerical evaluation of various other special functions such as polygamma
functions, polylogarithms, Dirichlet $L$-functions, generalized hypergeometric
functions at singularities \cite{Bogolubsky2006}, and certain
number-theoretical constants~\cite{Flajolet1996zeta}. High-precision numerical
values are of particular interest for guessing algebraic relations among
special values of such functions (which subsequently may be proved rigorously
by other means) or ruling out the existence of algebraic relations with
small norm~\cite{Bailey2000experimental}.

\section{Evaluation using the Euler-Maclaurin formula}

Assume that $f$ is analytic on a domain containing $[N, U]$ where
$N, U \in \mathbb{Z}$, and let $M$ be a positive integer. Let $B_n$ denote
the $n$-th Bernoulli number and let $\tilde B_n(t) = B_n(t-\lfloor t\rfloor)$
denote the $n$-th periodic Bernoulli polynomial. The Euler-Maclaurin
summation formula (described in numerous works, such as \cite{Olver1997})
states that
\begin{equation}
\sum_{k=N}^{U} f(k) = I + T + R
\label{eq:emformula}
\end{equation}
where
\begin{align}
I & = \int_N^{U} f(t) \,dt, \\
T & = \frac{1}{2} \left(f(N) + f(U)\right) + \sum_{k=1}^{M} \frac{B_{2k}}{(2k)!} \left( f^{(2k-1)}(U) - f^{(2k-1)}(N) \right), \\
R & = -\int_N^{U} \frac{\tilde B_{2M}(t)}{(2M)!} f^{(2M)}(t) \,dt.
\label{eq:emformulaR}
\end{align}
If $f$ decreases sufficiently rapidly, \eqref{eq:emformula}--\eqref{eq:emformulaR}
remain valid after letting $U \to \infty$.
To evaluate the Hurwitz zeta function, we set
\begin{equation*}
f(k) = \frac{1}{(a + k)^s} = \exp(-s \log(a+k))
\end{equation*}
with the conventional logarithm branch cut on $(-\infty,0)$.
The derivatives of $f(k)$ are given by
\begin{equation*}
f^{(r)}(k) = \frac{(-1)^r (s)_{r}}{(a+k)^{s+r}}
\end{equation*}
where $(s)_{r} = s (s+1) \cdots (s+r-1)$ denotes a rising factorial.
The Euler-Maclaurin summation formula now gives,
at least for $\Re(s) > 1$ and $a \ne 0, -1, -2, \ldots$,
\begin{equation}
\zeta(s,a) = \sum_{k=0}^{N-1} f(k) + \sum_{k=N}^{\infty} f(k) =
S + I + T + R
\label{eq:emzeta}
\end{equation}
where
\begin{align}
S & = \sum_{k=0}^{N-1} \frac{1}{(a+k)^s}, \label{eq:powersumS} \\
I & = \int_N^{\infty} \frac{1}{(a + t)^s} dt = \frac{(a+N)^{1-s}}{s-1}, \\
T & = \frac{1}{(a+N)^s} \left( \frac{1}{2} + \sum_{k=1}^M \frac{B_{2k}}{(2k)!} \frac{(s)_{2k-1}}{(a+N)^{2k-1}} \right), \\
R & = -\int_N^{\infty} \frac{\tilde B_{2M}(t)}{(2M)!} \frac{(s)_{2M}}{(a+t)^{s+2M}} dt. \label{eq:emzetaparts}
\end{align}
If we choose $N$ and $M$ such that $\Re(a+N) > 0$ and $\Re(s+2M-1) > 0$,
the integrals in $I$ and $R$ are well-defined,
giving us the analytic continuation of $\zeta(s,a)$ to
$s \in \mathbb{C}$ except for the pole at $s = 1$. 

In order to evaluate derivatives with respect to $s$ of $\zeta(s,a)$, we
substitute ${s \to s + x \in \mathbb{C}[[x]]}$ and 
evaluate \eqref{eq:emzeta}--\eqref{eq:emzetaparts}
with the corresponding arithmetic operations done
on formal power series (which may be truncated
at some arbitrary finite order in an implementation).
For example, the summand in \eqref{eq:powersumS} becomes
\begin{equation}
\frac{1}{(a+k)^{s+x}} = \sum_{i=0}^{\infty} \frac{(-1)^i \log(a+k)^i}{(a+k)^s} \, x^i \in \mathbb{C}[[x]].
\label{eq:powseries}
\end{equation}
Note that we can evaluate $\zeta(S,a)$ for any formal power series
$S = s + s_1 x + s_2 x^2 + \ldots$ by first evaluating $\zeta(s+x,a)$
and then formally right-composing by $S - s$.
We can also easily evaluate derivatives of $\zeta(s,a) - 1/(s-1)$ at $s = 1$.
The pole of $\zeta(s,a)$ only appears in the term $I$ on the
right hand side of \eqref{eq:emzeta}, so we can remove the singularity
as
\begin{align}
\notag \lim_{s \to 1} \left[ I - \frac{1}{(s+x)-1} = \frac{(a+N)^{1-(s+x)}}{(s+x)-1} - \frac{1}{(s+x)-1} \right] \\
\; = \sum_{i=0}^{\infty} \frac{(-1)^{i+1} \log(a+N)^{i+1}}{i!} \; x^i \; \in \mathbb{C}[[x]].
\label{eq:slimit}
\end{align}

For $F = \sum_k f_k x^k \in \mathbb{C}[[x]]$, define
$|F| = \sum_k |f_k| x^k \in \mathbb{R}[[x]]$.
If it holds for all $k$ that $|f_k| \le |g_k|$,
we write $|F| \le |G|$.
Clearly $|F + G| \le |F| + |G|$ and $|FG| \le |F||G|$.
With this notation, we wish to bound $|R(s+x)|$ where
$R(s) = R$ is the remainder integral given in \eqref{eq:emzetaparts}.

To express the error bound in a compact form, we introduce the
sequence of integrals defined for integers $k \ge 0$
and real parameters $A > 0, B > 1, C \ge 0$ by
\begin{equation*}
J_k(A,B,C) \equiv \int_A^{\infty} t^{-B} (C + \log t)^k dt.
\end{equation*}
Using the binomial theorem, $J_k(A,B,C)$ can be evaluated in closed
form for any fixed $k$. In fact, collecting factors gives
\begin{equation*}
J_k(A,B,C) = \frac{L_k}{(B-1)^{k+1} A^{B-1}}
\end{equation*}
where $L_0 = 1$, $L_k = k L_{k-1} + D^k$ and $D = (B-1) (C + \log A)$.
This recurrence allows computing $J_0, J_1, \ldots, J_n$ easily, using $O(n)$
arithmetic operations.

\begin{theorem}
\label{thm:bound}
Given complex numbers $s = \sigma + \tau i$, $a = \alpha + \beta i$
and positive integers $N, M$ such that $\alpha + N > 1$ and $\sigma + 2M > 1$,
the error term \eqref{eq:emzetaparts}
in the Euler-Maclaurin summation formula applied
to $\zeta(s+x,a) \in \mathbb{C}[[x]]$ satisfies
\begin{equation}
|R(s+x)| \le \frac{4 \left| (s+x)_{2M} \right|}{(2 \pi)^{2M}} \left| \sum_{k=0}^{\infty} R_k x^k \right| \in \mathbb{R}[[x]]
\label{eq:mainbound}
\end{equation}
where $R_k \le (K / k!) \, J_k(N+\alpha, \sigma + 2M, C)$, with
\begin{align}
\label{eq:Cdef}
C & = \frac{1}{2}\log\left(1+\frac{\beta^2}{(\alpha+N)^2}\right) +
    \operatorname{atan}\left(\frac{|\beta|}{\alpha+N}\right)
\end{align}
and
\begin{equation}
\label{eq:Kdef}
K = \exp\left(\max\left(0, \tau \operatorname{atan}\left(\frac{\beta}{\alpha + N}\right)\right)\right).
\end{equation}
\end{theorem}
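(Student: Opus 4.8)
The plan is to bound the power series $R(s+x)$ coefficientwise by pushing the coefficientwise modulus $|\cdot|$ through the remainder integral \eqref{eq:emzetaparts} and then estimating the resulting real integrals against the $J_k$. First I would record the classical bound on the periodic Bernoulli polynomial, $|\tilde B_{2M}(t)| \le |B_{2M}| = \frac{2(2M)!}{(2\pi)^{2M}}\zeta(2M) < \frac{4(2M)!}{(2\pi)^{2M}}$, which holds for all $t$ because the extreme values of $B_{2M}$ on $[0,1]$ occur at the endpoints and $\zeta(2M) < 2$. Since the scalar factor $(s+x)_{2M}$ does not depend on $t$, since $|FG| \le |F||G|$, and since $\left|\int_N^\infty \cdots\, dt\right| \le \int_N^\infty |\cdots|\,dt$ holds coefficientwise (each coefficient being an ordinary integral), the claim reduces to showing
\[
\int_N^\infty \left| \frac{1}{(a+t)^{s+x+2M}}\right| dt \le K \sum_{k=0}^\infty \frac{x^k}{k!}\, J_k(N+\alpha,\sigma+2M,C).
\]

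Next I would expand the integrand as a power series in $x$ via $(a+t)^{-(s+x+2M)} = (a+t)^{-(s+2M)}\exp(-x\log(a+t))$, so that coefficientwise
\[
\left|\frac{1}{(a+t)^{s+x+2M}}\right| = \frac{1}{|(a+t)^{s+2M}|}\sum_{k=0}^\infty \frac{|\log(a+t)|^k}{k!}\,x^k.
\]
Two ingredients remain: a bound on the real prefactor and a bound on $|\log(a+t)|$. Writing $\log(a+t) = \log|a+t| + i\arg(a+t)$ with $\arg(a+t) = \operatorname{atan}(\beta/(\alpha+t)) \in (-\tfrac{\pi}{2},\tfrac{\pi}{2})$ (legitimate since $\alpha+t > 1 > 0$), a direct computation gives $|(a+t)^{s+2M}| = |a+t|^{\sigma+2M}\exp(-\tau\arg(a+t))$. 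Using $|a+t| \ge \alpha+t$ together with $\sigma+2M > 0$, and bounding $\exp(\tau\arg(a+t)) \le K$, yields $1/|(a+t)^{s+2M}| \le K(\alpha+t)^{-(\sigma+2M)}$.

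For $|\log(a+t)|$ I would apply $|z| \le |\Re z| + |\Im z|$ to get $|\log(a+t)| \le \log|a+t| + |\arg(a+t)|$, then use $\log|a+t| = \log(\alpha+t) + \tfrac12\log(1+\beta^2/(\alpha+t)^2)$. Since $\alpha+t \ge \alpha+N > 1$, the two correction terms $\tfrac12\log(1+\beta^2/(\alpha+t)^2)$ and $\operatorname{atan}(|\beta|/(\alpha+t))$ are nonnegative and decreasing in $t$, hence maximized at $t = N$, where their sum is exactly the constant $C$ of \eqref{eq:Cdef}; this gives $|\log(a+t)| \le C + \log(\alpha+t)$. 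Substituting both bounds, integrating termwise, and changing variables to $u = \alpha+t$ turns each coefficient integral into $\int_{N+\alpha}^\infty u^{-(\sigma+2M)}(C+\log u)^k\,du = J_k(N+\alpha,\sigma+2M,C)$, which closes the argument and identifies $R_k$.

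The main obstacle I anticipate is the careful handling of the complex quantities. Establishing $\exp(\tau\arg(a+t)) \le K$ with the $\max(0,\cdot)$ structure of \eqref{eq:Kdef} requires a short case analysis on the signs of $\tau$ and $\beta$: when $\tau\beta \le 0$ the exponent is nonpositive, so $K = 1$ suffices, while when $\tau\beta > 0$ the quantity $\tau\operatorname{atan}(\beta/(\alpha+t))$ is positive and decreasing in $t$, hence maximized at $t = N$. One must also verify that passing $|\cdot|$ through the integral and through the power-series products is valid coefficientwise, using the properties of $|\cdot|$ recorded before the theorem. Once this bookkeeping is settled, the remaining estimates are the elementary monotonicity facts above.
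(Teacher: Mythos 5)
Your proposal is correct and follows essentially the same route as the paper's proof: the $4(2M)!/(2\pi)^{2M}$ bound on $\tilde B_{2M}$, coefficientwise expansion of $(a+t)^{-(s+x+2M)}$, the bound $|\log(a+t)| \le \log(\alpha+t) + C$ via the triangle inequality with monotonicity in $t$, the estimate $1/|(a+t)^{s+2M}| \le K(\alpha+t)^{-(\sigma+2M)}$ using $\arg(a+t) = \operatorname{atan}(\beta/(\alpha+t))$, and termwise comparison with $J_k$. Your only deviations are cosmetic — applying $|z| \le |\Re z| + |\Im z|$ directly to $\log(a+t)$ rather than first splitting off $\log(1+\beta i/(\alpha+t))$ as the paper does, and spelling out the sign analysis behind $K$ that the paper leaves implicit.
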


\begin{proof}
We have
\begin{align*}
|R(s+x)|
&= \left|\int_N^{\infty} \frac{\tilde B_{2M}(t)}{(2M)!}
\frac{(s+x)_{2M}}{(a+t)^{s+x+2M}} dt\right| \\
&\le \int_N^{\infty} \left| \frac{\tilde B_{2M}(t)}{(2M)!}
\frac{(s+x)_{2M}}{(a+t)^{s+x+2M}} \right| dt \\
&\le \frac{4 \left| (s+x)_{2M} \right|}{(2 \pi)^{2M}}
    \int_N^{\infty} \left| \frac{1}{(a+t)^{s+x+2M}} \right| dt
\end{align*}
where the last step invokes the fact that
\begin{equation*}
|\tilde B_{2M}(t)| < \frac{4 (2M)!}{(2\pi)^{2M}}.
\end{equation*}
Thus it remains to bound the coefficients $R_k$ satisfying
\begin{equation*}
\int_N^{\infty} \left| \frac{1}{(a+t)^{s+x+2M}} \right| dt = 
    \sum_k R_k x^k, \quad
    R_k = \int_N^{\infty} \frac{1}{k!}
    \left| \frac{\log(a+t)^k}{(a+t)^{s+2M}} \right| dt.
\end{equation*}
By the assumption that $\alpha + t \ge \alpha + N \ge 1$, we have
\begin{align*}
|\log(\alpha + \beta i + t)|
&= \left|\log(\alpha + t) +
        \log\left( 1 + \frac{\beta i}{\alpha + t}\right) \right| \\
&\le \log(\alpha + t) + \left|\log\left(1 + \frac{\beta i}{\alpha+t}\right)\right| \\
&= \log(\alpha+t) + \left|\frac{1}{2}\log\left(1+\frac{\beta^2}{(\alpha+t)^2}\right)
    + i \operatorname{atan}\left(\frac{\beta}{\alpha + t}\right)\right| \\
&\le \log(\alpha + t) + C
\end{align*}
where $C$ is defined as in \eqref{eq:Cdef}. By the assumption that $\sigma + 2M > 1$, we have
\begin{equation*}
\frac{1}{|(\alpha+\beta i+t)^{\sigma+\tau i + 2M}|}
= \frac{\exp(\tau \operatorname{arg}(\alpha+\beta i+t))}{|\alpha+\beta i+t|^{\sigma+2M}}
\le \frac{K}{(\alpha + t)^{\sigma+2M}}
\end{equation*}
where $K$ is defined as in \eqref{eq:Kdef}. Bounding the
integrand in $R_k$ in terms of the integrand in the definition
of $J_k$ now gives the result.
\end{proof}

The bound given in Theorem \ref{thm:bound} should generally approximate
the exact remainder \eqref{eq:emzetaparts} quite well,
even for derivatives of large order,
if $|a|$ is not too large. The quantity $K$ is especially crude,
however, as it does not decrease when $|a+t|^{-\tau i}$
decreases exponentially as a function of $\tau$.
We have made this simplification in order to obtain
a bound that is easy to evaluate for all $s, a$.
In fact, assuming that $a$ is small, we can simplify the bounds
a bit further using
\begin{equation*}
C \le \frac{\beta^2}{2 (\alpha+N)^2} + \frac{|\beta|}{(\alpha+N)}.
\end{equation*}
In practice, the Hurwitz zeta function is usually only
considered for $0 < a \le 1$, unless $s$ is an integer greater than 1
in which case it reduces to a polygamma function of $a$.
It is easy to derive error bounds for polygamma functions
that are accurate for large $|a|$, and we do not consider
this special case further here.

\section{Algorithmic matters}

The evaluation of $\zeta(s+x,a)$ can be broken into three stages:

\begin{enumerate}
\item Choosing parameters $M$ and $N$ and bounding the remainder $R$.
\item Evaluating the power sum $S$.
\item Evaluating the tail $T$ (and the trivial term $I$).
\end{enumerate}
In this section, we describe some algorithmic techniques that are
useful at each stage. We sketch the computational complexities, but
do not attempt to prove strict complexity bounds.

We assume that arithmetic on real and complex numbers is done using
ball arithmetic~\cite{vdH:ball}, which essentially is floating-point
arithmetic with the added automatic propagation of error bounds. This is
probably the most reasonable approach: \emph{a priori} floating-point error
analysis would be overwhelming to do in full generality (an analysis of the
floating-point error when evaluating $\zeta(s)$ for real $s$, with a partial
analysis of the complex case, is given in \cite{petermann:inria-00070174}).

Using algorithms based on the Fast Fourier Transform (FFT), arithmetic
operations on $b$-bit approximations of real or complex numbers can be done
in time $\tilde O(b)$, where the $\tilde O$-notation suppresses logarithmic
factors. This estimate also holds for division and evaluation of
elementary functions.

Likewise, polynomials of degree $n$ can be multiplied in
$\tilde O(n)$ coefficient operations. Here some care is required: when doing
arithmetic with polynomials that have approximate coefficients,
the \emph{accuracy} of the result can be much lower than the working precision,
depending on the shape of the polynomials and the multiplication algorithm.
If the coefficients vary in magnitude as $2^{\pm \tilde O(n)}$, we may need
$\tilde O(n)$ bits of precision to get an accurate result, making the
effective complexity $\tilde O(n^2)$. This issue is discussed further in \cite{vdH:stablemult}.

Many operations can be reduced to fast multiplication. In particular, we will need
the \emph{binary splitting} algorithm: if a sequence $c_n$ of integers
(or polynomials) satisfies a suitable linear recurrence relation and its
bit size (or degree) grows as $\tilde O(n)$, then we can use a balanced product
tree to evaluate $c_n$ using $\tilde O(n)$ bit (or coefficient) operations,
versus $\tilde O(n^2)$ for repeated application of the recurrence
relation \cite{Bernstein2008, HaiblePapanikolau1998}.

\subsection{Evaluating the error bound}

For a precision of $P$ bits, we should choose $N \sim M \sim P$.
A simple strategy is to
do a binary search for an $N$ that makes
the error bound small enough when $M = cN$ where $c \approx 1$.
This is sufficient for our present purposes, but more
sophisticated approaches are possible.
In particular, for evaluation at large heights
in the critical strip, $N$ should be larger than $M$.

Given complex balls for $s$ and $a$, and
integers $N$ and $M$,
we can evaluate the error bound \eqref{eq:mainbound} using
ball arithmetic.
The output is a power series
with ball coefficients.
The absolute value of each coefficient in this series
should be added to the radius for the
corresponding coefficient in $S + I + T \approx \zeta(s+x,a)$
at the end of the whole computation.
If the assumptions that
$\Re(a) + N > 1$ and $\Re(s) + 2M > 1$
are not satisfied for all points in the balls $s$ and $a$,
we set the error bounds for all coefficients to $+\infty$.

If we are computing $D$ derivatives and $D$ is large,
the rising factorial $|(s+x)_{2M}|$ can be computed
using binary splitting and the outer power series product in 
\eqref{eq:mainbound} can be done using
fast polynomial multiplication, so that only $\tilde O(D + M)$
real number operations are required.
Or, if $D$ is small and $M$ is large, $|(s+x)_{2M}|$ can
be computed via the gamma function in time independent of $M$

\subsection{Evaluating the power sum}

As a power series, the power sum $S$ becomes $\sum_{k=0}^{N-1} (\sum_i c_i(k) x^i)$
where the coefficients $c_i(k)$ are given by \eqref{eq:powseries}.
For $i \ge 1$, the coefficients can be computed using the recurrence
\begin{align*}
c_{i+1}(k) = -\frac{\log(a+k)}{i+1} c_i(k).
\end{align*}
If we are computing $D$ derivatives with a working precision of $P$ bits, the complexity
of evaluating the power sum is $\tilde O(NPD)$, or $\tilde O(N^2 D)$ if $N \sim P$.
The computation is easy to parallelize
by assigning a range of $k$ values to each thread
(for large $D$, a more memory-efficient method is to assign a range of
$i$ to each thread).

\begin{algorithm}
  \caption{Sieved summation of a completely multiplicative function}
  \begin{algorithmic}[1]
  \Require A function $f$ such that $f(jk) = f(j) f(k)$ for $j,k \in \mathbb{Z}_{\ge 1}$, and an integer $N \ge 1$
  \Ensure{$\sum_{k=1}^N f(k)$}
  \State $p \gets 2^{\lfloor \log_2 N \rfloor}$ (largest power of two such that $p \le N$)
  \State $h \gets 1$, $z \gets 0$, $u \gets 0$
  \State $D = [\,]$ \Comment{Build table of divisors}
  \For {$k \gets 1; \; k \le N; \; k \gets k + 2$}
    \State $D[k] \gets 0$
  \EndFor
  \For {$k \gets 3; \; k \le \lfloor \sqrt N \rfloor; \; k \gets k + 2$}
    \If {$D[k] = 0$}
      \For {$j \gets k^2; \; j \le N; \; j \gets j + 2k$}
        \State $D[j] \gets k$
      \EndFor
    \EndIf
  \EndFor
  \State $F = [\,]$ \Comment{Create initially empty cache of $f(k)$ values}
  \State $F[2] \gets f(2)$
  \For {$k \gets 1; \; k \le N; \; k \gets k + 2$}
    \If {$D[k] = 0$} \Comment{$k$ is prime (or 1)}
      \State $t \gets f(k)$
    \Else
      \State $t \gets F[D[k]] F[k / D[k]]$ \Comment{$k$ is composite}
    \EndIf
    \If {$3k \le N$}
      \State $F[k] \gets t$ \Comment{Store $f(k)$ for future use}
    \EndIf
    \State $u \gets u + t$
    \While {$k = h$ and $p \ne 1$} \Comment{Horner's rule}
      \State $z \gets u + F[2] z$
      \State $p \gets p / 2$
      \State $h \gets \lfloor N / p \rfloor$
      \If {$h$ is even} \State $h \gets h - 1$ \EndIf
    \EndWhile
  \EndFor
  \State \Return $u + F[2] z$
  \end{algorithmic}
  \label{alg:multsum}
\end{algorithm}

When evaluating the ordinary Riemann zeta function, i.e. when $a = 1$,
and we just want to compute a small number of derivatives,
we can speed up the power sum a bit. Writing the sum as
$\sum_{k=1}^N f(k)$, the
terms $f(k) = k^{-(s+x)}$
are completely multiplicative, i.e. $f(k_1 k_2) = f(k_1) f(k_2)$.
This means that we only need to evaluate $f(k)$ from scratch
when $k$ is prime; when $k$ is composite, a single multiplication
is sufficient.

This method has two drawbacks: we have to store
previously computed terms, which requires $O(NPD)$ space,
and the power series multiplication $f(k_1) f(k_2)$ becomes
more expensive than evaluating $f(k_1 k_2)$
from scratch for large $D$. For both reasons,
this method is only useful when $D$ is quite small (say $D \le 4$).

We can avoid some redundant work by collecting
multiples of small primes. For example, if we extract all powers of two,
$\sum_{k=1}^{10} f(k)$ can be written as
\begin{align*}
       & \; [f(1) + f(3) + f(5) + f(7) + f(9)] \\
+ f(2) & \; [f(1) + f(3) + f(5)]\\
+ f(4) & \; [f(1)] \\
+ f(8) & \; [f(1)].
\end{align*}
This is a polynomial in $f(2)$ and can be evaluated from bottom
to top using Horner's rule while progressively adding the terms
in the brackets.
Asymptotically, this reduces the number of multiplications
and the size of the tables by half.
Algorithm \ref{alg:multsum} implements this trick, and requires about
$\pi(N) \approx N / \log N$ evaluations of $f(k)$ and
$N / 2$ multiplications,
at the expense of having to store about $N / 6$ function values
plus a table of divisors of about $N / 2$ integers.
Constructing the table of divisors using the sieve of
Eratosthenes requires $O(N \log \log N)$ integer operations, but
this cost is negligible when multiplications and $f(k)$ evaluations
are expensive.
One could also extract other powers besides 2
(for example powers of 3 and 5), but this gives diminishing returns.

Another trick that can save time at high precision is to avoid
computing the logarithms of integers from scratch.
If $q$ and $p$ are nearby integers (such as two consecutive primes)
and we already know $\log(p)$, we can use the identity
\begin{equation*}
\log(q) = \log(p) + 2 \operatorname{atanh} \left( \frac{q-p}{q+p} \right)
\end{equation*}
and evaluate the inverse hyperbolic tangent by applying
binary splitting to its Taylor series. This is not
an asymptotic improvement over the best known algorithm
for computing the logarithm (which uses the arithmetic-geometric mean),
but likely faster in practice.

\subsection{Evaluating the tail}

Except for the multiplication by Bernoulli numbers, the terms of the tail sum $T$
satisfy a simple (hypergeometric) recurrence relation.
If we are computing $D$ derivatives with a working precision of $P$ bits, the complexity
of evaluating the tail by repeated application of the recurrence relation
is $\tilde O(MPD)$, or $\tilde O(P^2 D)$ if $M \sim P$.
We can do better if $D$ is large, using binary splitting (Algorithm~\ref{alg:bsplittail}).

\begin{algorithm}
  \caption{Evaluation of the tail $T$ using binary splitting}
  \label{alg:rsps}
  \begin{algorithmic}[1]
    \Require $s, a \in \mathbb{C}$ and $N, M, D \in \mathbb{Z}_{\ge 1}$
    \Ensure{$T = \displaystyle{\frac{1}{(a+N)^{s+x}}
        \left( \frac{1}{2} + \sum_{k=1}^M \frac{B_{2k}}{(2k)!}
        \frac{(s+x)_{2k-1}}{(a+N)^{2k-1}} \right)} \in \mathbb{C}[[x]] / \langle x^D \rangle$}
    \State Let $x$ denote the generator of $\mathbb{C}[[x]] / \langle x^D \rangle$
    \Function{BinSplit}{$j$, $k$}
        \If {$j + 1 = k$}
            \If {$j = 0$}
                \State{$\displaystyle{P \gets (s + x)/(2(a+N))}$}
            \Else
                \State{$\displaystyle{P \gets \frac{(s+2j-1+x)(s+2j+x)}{(2j+1)(2j+2) (a+N)^2}}$}
            \EndIf
            \State \Return $(P, \, B_{2j+2} P)$
        \Else
            \State $(P_1, \, R_1) \gets \Call{BinSplit}{j, \, \lfloor (j+k)/2 \rfloor}$
            \State $(P_2, \, R_2) \gets \Call{BinSplit}{\lfloor (j+k)/2 \rfloor, \, k}$
            \State \Return $(P_1 P_2, \, R_1 + P_1 R_2)$ \Comment{Polynomial multiplications mod $x^D$}
        \EndIf
    \EndFunction
    \State $(P, \, R) \gets \Call{BinSplit}{0, M}$
    \State $T \gets (a+N)^{-(s+x)} (1/2 + R)$ \Comment{Polynomial multiplication mod $x^D$}
    \State \Return {$T$}
  \end{algorithmic}
  \label{alg:bsplittail}
\end{algorithm}

If $D \sim M$, the complexity with binary splitting is only $\tilde O(PD)$,
or softly optimal in the bit size of the output.
A drawback is that the intermediate products increase the memory consumption.

The Bernoulli numbers can of course be cached for repeated
evaluation of the zeta function, but computing them the first
time can be a bottleneck at high precision, at least if done naively.
The first~$2M$ Bernoulli numbers
can be computed in quasi-optimal time $\tilde O(M^2)$,
for example by using Newton iteration and fast polynomial multiplication
to invert the power series $(e^x - 1) / x$.
For most practical purposes, simpler algorithms
with a time complexity of $\tilde O(M^3)$ are adequate, however.
Various algorithms are discussed in \cite{HarveyBrent2011}.
An interesting alternative, used in unpublished work of
Bloemen \cite{Bloemen2009}, is to compute $B_n$ via $\zeta(n)$
by direct approximation of the sum $\sum_{k=0}^{\infty} k^{-n}$,
recycling the powers to process several $n$ simultaneously.

\section{Implementation and benchmarks}

We have implemented the Hurwitz zeta function for
$s \in \mathbb{C}[[x]]$ and $a \in \mathbb{C}$ with rigorous error bounds as
part of the Arb library\footnote{\url{http://fredrikj.net/arb}}. This library
is written in C and is freely available under version~2 or later of the
GNU General Public License. It uses the MPFR \cite{Fousse2007} library for
evaluation of some elementary functions, GMP \cite{GMP} or MPIR \cite{MPIR}
for integer arithmetic, and FLINT \cite{Hart2010} for polynomial arithmetic.

Our implementation incorporates most of the techniques discussed in the previous section,
including optional parallelization of the power sum. Bernoulli numbers are
computed using the algorithm of Bloemen. Fast and numerically stable
multiplication in $\mathbb{R}[x]$ and $\mathbb{C}[x]$ is implemented by
rescaling polynomials and breaking them into segments with similarly-sized
coefficients and computing the subproducts exactly in $\mathbb{Z}[x]$
(a simplified version of van der Hoeven's block multiplication algorithm \cite{vdH:stablemult}).
Polynomial multiplication in $\mathbb{Z}[x]$ is done via FLINT which
for large polynomials uses a Sch\"{o}nhage-Strassen FFT implementation by William Hart.

\subsection{Computing zeros to high precision}

For $n \ge 1$, let $\rho_n$ denote the $n$-th smallest zero of $\zeta(s)$ with
positive imaginary part. We assume that $\rho_n$ is simple and has real part $1/2$.
Using Newton's method, we can evaluate $\rho_n$ to high precision nearly as
fast as we can evaluate $\zeta(s)$ for $s$ near $\rho_n$.

It is convenient to work with real numbers. The ordinate $t_n = \Im(\rho_n)$ is
a simple zero of the real-valued function $Z(t) = e^{i \theta(t)} \zeta(1/2+it)$ where
\begin{equation*}
\theta(t) = \frac{\log\Gamma\left(\frac{2it+1}{4}\right)-\log\Gamma\left(\frac{-2it+1}{4}\right)}{2i} - \frac{\log \pi}{2} t.
\end{equation*}
We assume that we are given an isolating ball $B_0 = [m_0-\varepsilon_0, m_0+\varepsilon_0]$
such that $t_n \in B_0$ and ${t_{m} \not \in B_0}, {m \ne n}$, and wish to
compute $t_n$ to high precision (finding such a ball for a given $n$ is an
interesting problem, but we do not consider it here).

Newton's method maps an approximation $z_n$ of a root of a real analytic
function $f(z)$ to a new approximation $z_{n+1}$ via $z_{n+1} = z_n - f(z_n) / f'(z_n)$.
Using Taylor's theorem, the error can be shown to satisfy
\begin{equation*}
\left| {\epsilon_{n+1}}\right| = \frac {\left| f^{\prime\prime} (\xi_n) \right| }{2 \left| f^\prime(z_n) \right|} \, |\epsilon_n|^2
\end{equation*}
for some $\xi_n$ between $z_n$ and the root.

As a setup step, we evaluate $Z(s), Z'(s), Z''(s)$
(simultaneously using power series arithmetic) at $s = B_0$, and compute
\begin{equation*}
C = \frac{\max |Z''(B_0)|}{2 \min |Z'(B_0)|}.
\end{equation*}
This only needs to be done at low precision.

Starting from an input ball $B_k = [m_k-\varepsilon_k, m_k+\varepsilon_k]$,
one step with Newton's method gives an output ball
$B_{k+1} = [m_{k+1}-\varepsilon_{k+1}, m_{k+1}+\varepsilon_{k+1}]$.
The updated midpoint is given by
\begin{equation}
m_{k+1} = m_k - \frac{Z(m_k)}{Z'(m_k)}
\label{eq:newtonstep}
\end{equation}
where we evaluate $Z(m_k)$ and $Z'(m_k)$ simultaneously using power series
arithmetic. The updated radius is given by
$\varepsilon_{k+1} = \varepsilon'_{k+1} + C \varepsilon_k^2$ where $\varepsilon'_{k+1}$
is the numerical error (or a bound thereof) resulting from evaluating
\eqref{eq:newtonstep} using finite-precision arithmetic.
The new ball is valid as long as $B_{k+1} \subseteq B_k$ (if this does not hold,
the algorithm fails and we need to start with a better $B_0$ or
increase the working precision).

For best performance, the evaluation precision should be chosen so that
$\varepsilon'_{k+1} \approx C \varepsilon_k^2$. In other words, for a
target accuracy of $p$ bits, the evaluations should be done at
$\ldots, p/4, p/2, p$ bits, plus some guard bits.

As a benchmark problem, we compute an approximation $\tilde \rho_1$ of the
first nontrivial zero $\rho_1 \approx 1/2 + 14.1347251417i$ and then evaluate
$\zeta(\tilde \rho_1)$ to the same precision.  We compare our implementation
of the zeta function and the root-refinement algorithm described above
(starting from a double-precision isolating ball) with the
\texttt{zetazero} and \texttt{zeta} functions provided in mpmath version 0.17 in
Sage 5.10 \cite{sage} and the \texttt{ZetaZero} and \texttt{Zeta} functions provided
in Mathematica 9.0. The results of this benchmark are shown
in Table \ref{tab:zerotimings}. At 10000 digits, our code for computing the
zero is about two orders of magnitude faster than the other systems, and the
subsequent single zeta evaluation is about one order of magnitude faster.

We have computed $\rho_1$ to 303000 digits, or slightly more than one
million bits, which appears to be a record (a 20000-digit value is given in
\cite{MatiyasevichBeliakov2011}). The computation used up to 62 GiB of memory
for the sieved power sum and the storage of Bernoulli numbers up to $B_{325328}$
(to attain even higher precision, the memory usage could be reduced by
evaluating the power sum without sieving, perhaps using several CPUs in
parallel, and not caching Bernoulli numbers).

\begin{table}
\centering
\begin{tabular}{|l | r r | r r | r r |}
\hline
Digits
& \multicolumn{2}{|c|}{mpmath}
& \multicolumn{2}{|c|}{Mathematica}
& \multicolumn{2}{|c|}{Arb} \\
\hline
    & $\tilde \rho_1$ & $\zeta(\tilde \rho_1)$
    & $\tilde \rho_1$ & $\zeta(\tilde \rho_1)$ & $\tilde \rho_1$ & $\zeta(\tilde \rho_1)$ \\
100    & 0.080 & 0.0031 & 0.044 & 0.012 & 0.012 & 0.0011 \\
1000   & 7.1 & 0.24 & 11 & 1.6 & 0.18 & 0.05 \\
10000  & 7035 & 252 & 5127 & 779 & 29 & 15 \\
100000 & - & - & - & - & 6930   & 3476 \\
303000 & - & - & - & - & 73225 & 31772 \\
\hline
\end{tabular}
\caption{Time in seconds to compute an approximation $\tilde \rho_1$ of the
first nontrivial zero $\rho_1$ accurate to the specified number of decimal digits,
and then to evaluate $\zeta(\tilde \rho_1)$ at the same precision.
Computations were done on a 64-bit Intel Xeon E5-2650 2.00 GHz CPU.}
\label{tab:zerotimings}
\end{table}

\subsection{Computing the Keiper-Li coefficients}

Riemann's function $\xi(s) = \frac{1}{2} s (s-1) \pi^{-s/2} \Gamma(s/2) \zeta(s)$
satisfies the symmetric functional equation $\xi(s) = \xi(1-s)$.
The coefficients $\{\lambda_n\}_{n=1}^{\infty}$ defined by
\begin{equation*}
\log \xi\left(\frac{1}{1-x}\right) = \log \xi\left(\frac{x}{x-1}\right) = -\log 2 + \sum_{n=1}^{\infty} \lambda_n x^n
\end{equation*}
were introduced by Keiper \cite{Keiper1992power}, who
noted that the truth of the Riemann hypothesis would imply
that $\lambda_n > 0$ for all $n > 0$.
In fact, Keiper observed that if one makes an assumption about the
distribution of the zeros of $\zeta(s)$
that is even stronger than the Riemann hypothesis, the coefficients
$\lambda_n$ should behave as
\begin{equation}
\lambda_n \approx (1/2) \left(\log n - \log(2\pi) + \gamma - 1\right).
\label{eq:liapprox}
\end{equation}
Keiper presented numerical evidence for this conjecture by
computing $\lambda_n$ up to $n = 7000$, showing that the
approximation error appears to fluctuate increasingly close to zero.
Some years later, Li proved \cite{Li1997positivity} that the Riemann hypothesis
actually is equivalent to the positivity of $\lambda_n$ for all $n > 0$
(this reformulation of the Riemann hypothesis is known as Li's criterion).
Recently, Arias de Reyna has proved that a certain precise statement
of \eqref{eq:liapprox} also is equivalent to
the Riemann hypothesis~\cite{Arias2011}.

\begin{table}
\centering
\begin{tabular}{|l | r r r|}
\hline
                      &  $n = 1000$   &  $n = 10000$   & $n = 100000$  \\
\hline
1: Error bound           &    0.017      &   1.0    &    97               \\
1: Power sum             &    0.048      &   47     &     65402   \\
(1: Power sum, CPU time) &    (0.65)     &   (693)  &    (1042210) \\
1: Bernoulli numbers     &    0.0020     &   0.19  &     59               \\
1: Tail                  &    0.058      &   11    &     1972             \\
\hline
2: Series logarithm      &    0.047      &   8.5   &     1126             \\
3: log $\Gamma(1+x)$ series &    0.019      &   3.0   &  1610             \\
4: Composition           &    0.022      &   4.1   &     593              \\
\hline
Total wall time          &    0.23       &   84   &      71051  \\
\hline
Peak RAM usage (MiB)   &    8        &   730   &     48700            \\
\hline
\end{tabular}
\caption{Elapsed time in seconds to evaluate the Keiper-Li coefficients
$\lambda_0 \ldots \lambda_n$ with a working precision of $1.1n + 50$ bits,
giving roughly $0.1n$ accurate bits. The computations were done
on a multicore system with 64-bit Intel Xeon E7-8837 2.67 GHz CPUs
(16 threads were used for the power sum, and all
other parts were computed serially on a single core).}
\label{tab:listats}
\end{table}

\begin{figure}[width=13cm]
\begin{center}
\includegraphics[width=12cm]{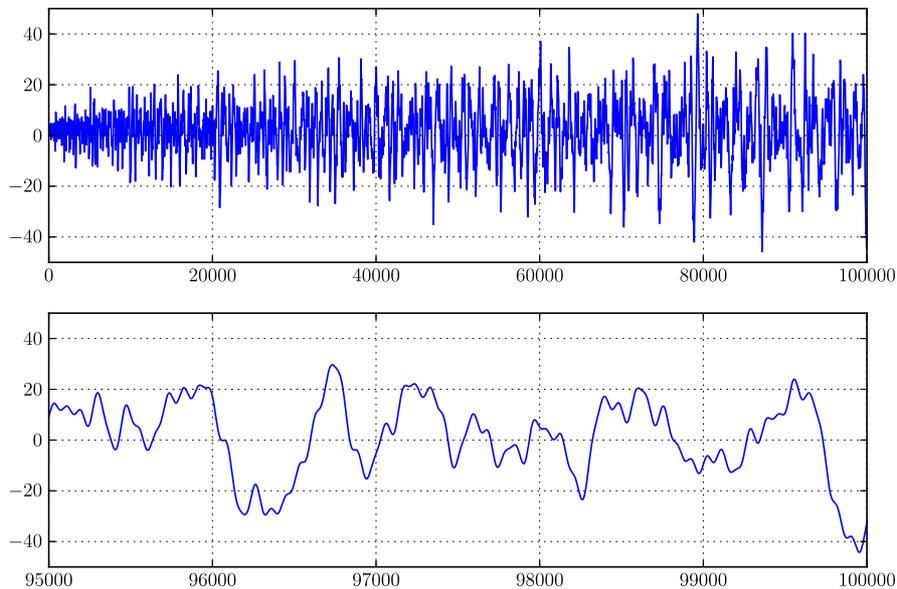}
\caption{Plot of $n \left( \lambda_n - (\log n - \log(2\pi) + \gamma - 1)/2\right)$.}
\label{plt:liplot}
\end{center}
\end{figure}

A computation of the Keiper-Li coefficients up to $n = 100000$
shows agreement with Keiper's conjecture (and the Riemann hypothesis),
as illustrated in Figure \ref{plt:liplot}.
We obtain
$\lambda_{100000} = 4.62580782406902231409416038\ldots$
(plus about 2900 more accurate digits),
whereas \eqref{eq:liapprox} gives $\lambda_{100000} \approx 4.626132$.
Empirically, we need a working precision of about $n$ bits
to determine $\lambda_n$ accurately.
A breakdown of the computation time to determine
the signs of $\lambda_n$ up to $n = 1000$, $10000$ and $100000$
is shown in Table \ref{tab:listats}.


Our computation of the Keiper-Li coefficients uses the formula
\begin{equation*}
\log \xi(s) = \log(-\zeta(s)) + \log \Gamma\left(1+\frac{s}{2}\right)
+ \log(1-s) - \frac{s \log \pi}{2}
\end{equation*}
which we evaluate at $s = x \in \mathbb{R}[[x]]$. This arrangement of
the terms avoids singularities and branch cuts at the expansion point.
We carry out the following steps (plus some more trivial operations):

\begin{enumerate}
\item Computing the series expansion of $\zeta(s)$ at $s = 0$.
\item Computing the logarithm of a power series, i.e. $\log f(x) = \int f'(x) / f(x) dx$.
\item Computing the series expansion of $\log \Gamma(s)$ at $s = 1$, i.e. computing $\gamma, \zeta(2), \zeta(3), \zeta(4), \ldots$.
\item Finally, right-composing by $x / (x - 1)$ to obtain the Keiper-Li coefficients.
\end{enumerate}

Step 2 requires $O(M(n))$ arithmetic operations on real numbers.
We use a hybrid algorithm to compute the integer zeta values in step 3;
the details are beyond the scope of the present paper.

There is a very fast way to perform step 4. For
$f = \sum_{k=0}^{\infty} a_k x^k \in \mathbb{C}[[x]]$, the binomial (or Euler)
transform $T \colon \mathbb{C}[[x]] \to \mathbb{C}[[x]]$ is defined by
\begin{equation*}
T[f(x)] = \frac{1}{1-x} f\left(\frac{x}{x-1}\right) =
\sum_{n=0}^{\infty} \left(
\sum_{k=0}^n (-1)^k {n \choose k} a_k
\right) x^n.
\end{equation*}
We have
\begin{equation*}
f\left(\frac{x}{x-1}\right) = a_0 + x T\left[\frac{a_0 - f}{x}\right].
\end{equation*}
If $B \colon \mathbb{C}[[x]] \to \mathbb{C}[[x]]$ denotes the Borel transform
\begin{equation*}
B \left[\sum_{k=0}^{\infty} a_k x^k\right] = \sum_{k=0}^{\infty} \frac{a_k}{k!} x^k,
\end{equation*}
then (see \cite{Gould1990}) $T[f(x)] = B^{-1}[e^x B[f(-x)]]$.
This identity gives an algorithm for evaluating the composition which
requires only $M(n) + O(n)$ coefficient operations
where $M(n) = \tilde O(n)$ is the operation complexity of polynomial
multiplication.
Moreover, this algorithm is numerically stable (in the sense that it does not
significantly increase errors from the input when using ball arithmetic),
provided that a numerically stable polynomial multiplication algorithm is used.

The composition could also be carried out using various generic algorithms for
composition of power series. We tested three other algorithms,
and found them to perform much worse:

\begin{itemize}
\item Horner's rule is slow (requiring about $n M(n)$ operations) and is
numerically unsatisfactory in the sense that it gives extremely poor error bounds with ball arithmetic.
\item The Brent-Kung algorithm based on matrix multiplication \cite{BrentKung1978}
turns out to give adequate error bounds, but uses about $O(n^{1/2} M(n) + n^2)$
operations which still is expensive for large~$n$.
\item We also tried binary splitting: to evaluate $f(p/q)$ where $f$ is a power
series and $p$ and $q$ are polynomials, we recursively split the evaluation in
half and keep numerator and denominator polynomials separated. In the end, we
perform a single power series division. This only costs $O(M(n) \log n)$
operations, but turns out to be numerically unstable. It would be of
independent interest to investigate whether this algorithm can be modified to avoid the stability problem.
\end{itemize}

\subsection{Computing the Stieltjes constants}

The generalized Stieltjes constants $\gamma_n(a)$ are defined by
\begin{equation*}
\zeta(s,a)=\frac{1}{s-1}+\sum_{n=0}^\infty \frac{(-1)^n}{n!} \gamma_n(a) \; (s-1)^n.
\end{equation*}
The ``usual'' Stieltjes constants are $\gamma_n(1) = \gamma_n$, and
$\gamma_0 = \gamma \approx 0.577216$ is Euler's constant. The Stieltjes
constants were first studied over a century ago. Some historical notes and
numerical values of $\gamma_n$ for $n \le 20$ are given in \cite{Bohman1988}.
Keiper \cite{Keiper1992power} provides a method for computing the Stieltjes
constants based on numerical integration and recurrence relations, and lists
various $\gamma_n$ up to $n = 150$. Keiper's algorithm is implemented in
Mathematica \cite{WolframNotes}.

More recently, Kreminski \cite{Kreminski2003} has given an algorithm for
the Stieltjes constants, also based on numerical integration but different
from Keiper's. He reports having computed $\gamma_n$ to a few thousand digits
for all $n \le 10000$, and provides further isolated values up to
$\gamma_{50000}$ (accurate to 1000 digits) as well as tables of $\gamma_n(a)$
with various $a \ne 1$.

The best proven bounds for the Stieltjes constants appear to be very pessimistic.
In a recent paper, Knessl and Coffey \cite{Knessl2011effective} give an
asymptotic approximation formula for the Stieltjes constants that seems to
be very accurate even for small $n$. Based on numerical computations done with
Mathematica, they note that their approximation correctly predicts the
sign of $\gamma_n$ up to at least $n = 35000$ with the single
exception of $n = 137$.

Our implementation immediately gives the generalized Stieltjes constants
by computing the series expansion of $\zeta(s,a) - 1/(s-1)$ at $s = 1$
using~\eqref{eq:slimit}. The costs are similar to those for computing the
Keiper-Li coefficients: due to ill-conditioning, it appears that we need
about $n + p$ bits of precision to determine $\gamma_n$
with $p$ bits of accuracy. This makes our method somewhat unattractive
for computing just a few digits of $\gamma_n$ when $n$ is large, but
reasonably good if we want a large number of digits. Our method is
also useful if we want to compute a table of all the values
$\gamma_0, \ldots, \gamma_n$ simultaneously.

For example, we can compute $\gamma_n$ for all $n \le 1000$ to 1000-digit
accuracy in just over 10 seconds on a single CPU. Computing the single
coefficient $\gamma_{1000}$ to 1000-digit accuracy with
Mathematica~9.0 takes 80 seconds, with an estimated 20 hours required
for all $n \le 1000$. Thus our implementation is nearly four orders of
magnitude faster. We can compute a table of accurate values of $\gamma_n$
for all $n \le 10000$ in a few minutes on an ordinary workstation
with around one GiB of memory.

We have computed all $\gamma_n$ up to $n = 100000$ using a working
precision of $125050$ bits, resulting in an accuracy from about
37640 decimal digits for $\gamma_0$ to about 10860 accurate digits
for $\gamma_{100000}$. The computation took 26~hours on a multicore
system with 16 threads utilized for the power sum, with a peak memory
consumption of about 80 GiB during the binary splitting evaluation
of the tail. As shown in Figure~\ref{plt:stieltjesplot}, the accuracy
of the Knessl-Coffey approximation approaches six digits on average.
Our computation gives
$\gamma_{100000} = 1.991927306312541095658\ldots \times 10^{83432}$,
while the Knessl-Coffey approximation gives
$\gamma_n \approx 1.9919333 \times 10^{83432}$.
We are able to verify that $n = 137$ is the only instance for $n \le 100000$
where the Knessl-Coffey approximation has the wrong sign.

\begin{figure}[width=13cm]
\begin{center}
\includegraphics[width=12cm]{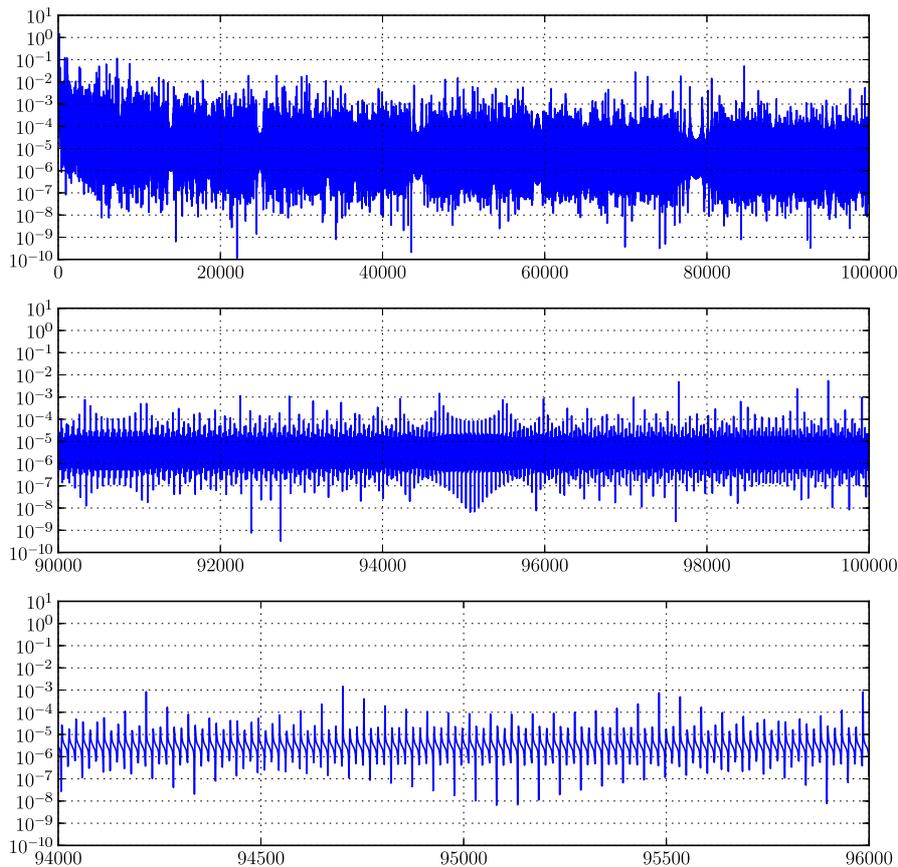}
\caption{Plot of the relative error $|\gamma_n - \tilde \gamma_n|/|\gamma_n|$
of the Knessl-Coffey approximation for the Stieltjes constants.
The error exhibits a complex oscillation pattern.}
\label{plt:stieltjesplot}
\end{center}
\end{figure}

We emphasize that our implementation gives $\gamma_n(a)$ with proved
error bounds, while the other cited works and implementations
(to our knowledge) depend on heuristic error estimates.

We have not yet implemented a function for computing isolated
Stieltjes constants of large index; this would have roughly the same
running time as the evaluation of the tail (since only a single derivative
of the power sum would have to be computed). The memory consumption is
highest when evaluating the tail, and would therefore remain the same.

\section{Discussion}

One direction for further work would be to improve the error bounds for
large $|a|$ and to investigate strategies for selecting $N$ and $M$ optimally,
particularly when the number of derivatives is large.  It would also be
interesting to investigate parallelization of the tail sum, or look for
ways to evaluate a single derivative of high order of the tail in a
memory-efficient way. Further constant-factor improvements are possible
in an implementation, for example by reducing the precision of terms that
have small magnitude (rather than naively performing all operations
at the same precision).

Finally, it would be interesting to compare the efficiency of the
Euler-Maclaurin formula with other approaches to evaluating the
Hurwitz zeta function such as the algorithms of
Borwein \cite{Borwein2000}, Vep{\v{s}}tas \cite{Vepstas2008} and
Coffey \cite{Coffey2009efficient}.

\bibliographystyle{plain}
\bibliography{references.bib}

\begin{thebibliography}{10}

\bibitem{Bailey2000experimental}
D.~H. Bailey and J.~M. Borwein.
\newblock Experimental mathematics: recent developments and future outlook.
\newblock In B.~Engquist, W.~Schmid, and P.~W. Michor, editors, {\em
  Mathematics Unlimited -- 2001 and Beyond}, pages 51--66. Springer, 2000.

\bibitem{Bernstein2008}
D.~J. Bernstein.
\newblock Fast multiplication and its applications.
\newblock {\em Algorithmic Number Theory}, 44:325--384, 2008.

\bibitem{Bloemen2009}
R.~Bloemen.
\newblock Even faster $\zeta(2n)$ calculation!, 2009.
\newblock
  \url{http://remcobloemen.nl/2009/11/even-faster-zeta-calculation.html}.

\bibitem{Bogolubsky2006}
A.~I. Bogolubsky and S.~L. Skorokhodov.
\newblock Fast evaluation of the hypergeometric function ${}_p{F}_{p-1}(a; b;
  z)$ at the singular point $z=1$ by means of the {H}urwitz zeta function
  $\zeta(\alpha, s)$.
\newblock {\em Programming and Computer Software}, 32(3):145--153, 2006.

\bibitem{Bohman1988}
J.~Bohman and C-E. Fr{\"o}berg.
\newblock The {S}tieltjes function -- definition and properties.
\newblock {\em Mathematics of Computation}, 51(183):281--289, 1988.

\bibitem{BorweinBradleyCrandall2000}
J.~M. Borwein, D.~M. Bradley, and R.~E. Crandall.
\newblock Computational strategies for the {R}iemann zeta function.
\newblock {\em Journal of Computational and Applied Mathematics}, 121:247--296,
  2000.

\bibitem{Borwein2000}
P.~Borwein.
\newblock An efficient algorithm for the {R}iemann zeta function.
\newblock {\em Canadian Mathematical Society Conference Proceedings},
  27:29--34, 2000.

\bibitem{BrentKung1978}
R.~P. Brent and H.~T. Kung.
\newblock Fast algorithms for manipulating formal power series.
\newblock {\em Journal of the ACM}, 25(4):581--595, 1978.

\bibitem{Coffey2009efficient}
M.~W. Coffey.
\newblock An efficient algorithm for the {H}urwitz zeta and related functions.
\newblock {\em Journal of Computational and Applied Mathematics},
  225(2):338--346, 2009.

\bibitem{Arias2011}
J.~Arias de~Reyna.
\newblock Asymptotics of {K}eiper-{L}i coefficients.
\newblock {\em Functiones et Approximatio Commentarii Mathematici},
  45(1):7--21, 2011.

\bibitem{GMP}
The~GMP development team.
\newblock {GMP: The {GNU} multiple precision arithmetic library}.
\newblock \url{http://www.gmplib.org}.

\bibitem{MPIR}
The~MPIR development team.
\newblock {MPIR}: {M}ultiple {P}recision {I}ntegers and {R}ationals.
\newblock \url{http://www.mpir.org}.

\bibitem{Edwards1974}
H.~M. Edwards.
\newblock {\em Riemann's zeta function}.
\newblock Academic Press, 1974.

\bibitem{Flajolet1996zeta}
P.~Flajolet and I.~Vardi.
\newblock Zeta function expansions of classical constants.
\newblock Unpublished manuscript,
  \url{http://algo.inria.fr/flajolet/Publications/landau.ps}, 1996.

\bibitem{Fousse2007}
L.~Fousse, G.~Hanrot, V.~Lef\`evre, P.~P\'elissier, and P.~Zimmermann.
\newblock {MPFR}: A multiple-precision binary floating-point library with
  correct rounding.
\newblock {\em {ACM} Transactions on Mathematical Software}, 33(2):13:1--13:15,
  June 2007.
\newblock \url{http://mpfr.org}.

\bibitem{Gould1990}
H.~Gould.
\newblock Series transformations for finding recurrences for sequences.
\newblock {\em Fibonacci Quarterly}, 28:166--171, 1990.

\bibitem{HaiblePapanikolau1998}
B.~Haible and T.~Papanikolaou.
\newblock Fast multiprecision evaluation of series of rational numbers.
\newblock In J.~P. Buhler, editor, {\em Algorithmic Number Theory: Third
  International Symposium}, volume 1423, pages 338--350. Springer, 1998.

\bibitem{Hart2010}
W.~B. Hart.
\newblock Fast {L}ibrary for {N}umber {T}heory: {A}n {I}ntroduction.
\newblock In {\em Proceedings of the Third international congress conference on
  Mathematical software}, ICMS'10, pages 88--91, Berlin, Heidelberg, 2010.
  Springer-Verlag.
\newblock \url{http://flintlib.org}.

\bibitem{HarveyBrent2011}
D.~Harvey and R.~P. Brent.
\newblock Fast computation of {B}ernoulli, tangent and secant numbers, 2011.
\newblock \url{http://arxiv.org/abs/1108.0286}.

\bibitem{Hiary2011fast}
G.~Hiary.
\newblock Fast methods to compute the {R}iemann zeta function.
\newblock {\em Annals of mathematics}, 174:891--946, 2011.

\bibitem{WolframNotes}
Wofram~Research Inc.
\newblock Some notes on internal implementation (section of the online
  documentation for {M}athematica 9.0).
\newblock
  \url{http://reference.wolfram.com/mathematica/tutorial/SomeNotesOnInternalIm%
plementation.html}, 2013.

\bibitem{Keiper1992power}
J.~B. Keiper.
\newblock Power series expansions of {R}iemann's $\xi$ function.
\newblock {\em Mathematics of Computation}, 58(198):765--773, 1992.

\bibitem{Knessl2011effective}
C.~Knessl and M.~Coffey.
\newblock An effective asymptotic formula for the {S}tieltjes constants.
\newblock {\em Mathematics of Computation}, 80(273):379--386, 2011.

\bibitem{Kreminski2003}
R.~Kreminski.
\newblock {N}ewton-{C}otes integration for approximating {S}tieltjes
  (generalized {E}uler) constants.
\newblock {\em Mathematics of Computation}, 72(243):1379--1397, 2003.

\bibitem{Li1997positivity}
X-J. Li.
\newblock The positivity of a sequence of numbers and the {R}iemann
  {H}ypothesis.
\newblock {\em Journal of Number Theory}, 65(2):325--333, 1997.

\bibitem{Matiyasevich2012}
Y.~Matiyasevich.
\newblock An artless method for calculating approximate values of zeros of
  {R}iemann's zeta function, 2012.
\newblock \url{http://logic.pdmi.ras.ru/~yumat/personaljournal/artlessmethod/}.

\bibitem{MatiyasevichBeliakov2011}
Y.~Matiyasevich and G.~Beliakov.
\newblock Zeroes of {R}iemann's zeta function on the critical line with 20000
  decimal digits accuracy, 2011.
\newblock \url{http://dro.deakin.edu.au/view/DU:30051725?print_friendly=true}.

\bibitem{Odlyzko1988fast}
A.~M. Odlyzko and A.~Sch{\"o}nhage.
\newblock Fast algorithms for multiple evaluations of the {R}iemann zeta
  function.
\newblock {\em Transactions of the American Mathematical Society},
  309(2):797--809, 1988.

\bibitem{Olver1997}
F.~W.~J. Olver.
\newblock {\em Asymptotics and Special Functions}.
\newblock A K Peters, Wellesley, MA, 1997.

\bibitem{petermann:inria-00070174}
Y.-F.S P{\'e}termann and J-L. R{\'e}my.
\newblock {Arbitrary precision error analysis for computing $\zeta(s)$ with the
  Cohen-Olivier algorithm: complete description of the real case and
  preliminary report on the general case}.
\newblock Rapport de recherche RR-5852, INRIA, 2006.

\bibitem{sage}
W.\thinspace{}A. Stein et~al.
\newblock {\em {S}age {M}athematics {S}oftware}.
\newblock The Sage Development Team, 2013.
\newblock \url{http://www.sagemath.org}.

\bibitem{vdH:stablemult}
J.~van~der Hoeven.
\newblock Making fast multiplication of polynomials numerically stable.
\newblock Technical Report 2008-02, Universit\'e Paris-Sud, Orsay, France,
  2008.

\bibitem{vdH:ball}
J.~van~der Hoeven.
\newblock Ball arithmetic.
\newblock Technical report, HAL, 2009.
\newblock \url{http://hal.archives-ouvertes.fr/hal-00432152/fr/}.

\bibitem{Vepstas2008}
L.~Vep{\v{s}}tas.
\newblock An efficient algorithm for accelerating the convergence of
  oscillatory series, useful for computing the polylogarithm and {H}urwitz zeta
  functions.
\newblock {\em Numerical Algorithms}, 47(3):211--252, 2008.

\end{thebibliography}

\end{document}